\title{Quasi-pinning and entanglement
\\
in the lithium isoelectronic series}
\author{Carlos L. Benavides-Riveros$^{1,2,3}$, Jos\'e M.
Gracia-Bond\'ia$^{1,2}$ \\
and Michael Springborg$^3$
\\ \\
$^1$Departamento de F\'isica Te\'orica
\\
Universidad de Zaragoza, 50009 Zaragoza, Spain
\\ \\
$^2$Instituto de Biocomputaci\'on y F\'isica de Sistemas Complejos 
(BIFI)
\\
Universidad de Zaragoza, 50018 Zaragoza, Spain
\\ \\
$^3$Physikalische und Theoretische Chemie 
\\
Universit\"at des Saarlandes, 66123 Saarbr\"ucken, Germany \\ \\
}
\date{\today}
\DeclareMathOperator{\Tr}{Tr}       %% trace
\newcommand{\al}{\alpha}            %% short for \alpha
\newcommand{\Up}{\Upsilon}          %% short for \alpha
\newcommand{\io}{\iota}             %% short for \beta
\newcommand{\dl}{\delta}            %% \delta
\newcommand{\ga}{\gamma}            %% \gamma
\newcommand{\la}{\lambda}           %% \lambda
\newcommand{\om}{\omega}            %% short for \omega
\newcommand{\vf}{\varphi}           %% \varphi
\newcommand{\vs}{\varsigma}         %% \varsigma
\newcommand{\bra}[1]{\langle#1|}    %% bra vector
\newcommand{\D}{\mathcal{D}}        %% set of density matrices
\newcommand{\dn}{{\mathord{\downarrow}}} %% `down' spinors
\newcommand{\half}{\tfrac{1}{2}}    %% small fraction  1/2
\renewcommand{\H}{\mathcal{H}}      %% Hilbert space
\newcommand{\ket}[1]{|#1\rangle}    %% ket vector
\newcommand{\ox}{\otimes}           %% tensor product
\newcommand{\up}{{\mathord{\uparrow}}} %% `up' spinors
\newcommand{\word}[1]{\quad\mbox{#1}\quad} %% well-spaced words
\newcommand{\x}{\times}             %% cartesian product
\newcommand{\cu}{\wedge} %% wedge
\newcommand{\vecform}{\bm}              %% format for trivectors
\newcommand{\rr}{\vecform{r}}           %% trivector r
\newcommand{\xx}{\vecform{x}}           %% trivector x
\newtheorem{thm}{Theorem}               %% Theorem 1
\newcommand{\Djokovic}{\mbox{D\kern-.8em\raise.15ex\hbox{--}\kern.35em}okovi\'c}
\def\section{\@startsection{section}{1}{\z@}{-3.5ex plus -1ex minus
 -.2ex}{2.3ex plus .2ex}{\large\bfseries}}
\def\subsection{\@startsection{subsection}{2}{\z@}{-3.25ex plus -1ex
 minus -.2ex}{1.5ex plus .2ex}{\normalsize\bfseries}}
\begin{document}

\maketitle

\begin{abstract}
The Pauli exclusion principle gives an upper bound of 1 on the natural
occupation numbers. Recently there has been an intriguing amount of
theoretical evidence that there is a plethora of additional
generalized Pauli restrictions or (in)equalities, of kinematic
nature, satisfied by these numbers~\cite{Alturulato}. Here for the
first time a numerical analysis of the nature of such constraints is
effected in real atoms. The inequalities are nearly saturated, or
\textit{quasi-pinned}. For rank-six and rank-seven approximations for
lithium, the deviation from saturation is smaller than the lowest occupancy
number. For a rank-eight approximation we find well-defined families
of saturation conditions.
\end{abstract}

% Keywords: Natural occupation numbers, quasi-pinning, three-electron
% atoms, tripartite entanglement

PACS numbers: 31.15.V-, 03.67.-a
% 03.67.-a Quantum information
% 31.15.V- Configuration interaction calculations

%\tableofcontents

\newpage

% \S 1
\section{Introduction}

The natural occupation numbers, arranged in the customary decreasing
order $\la_1\ge\la_2\ge\cdots$ fulfil $0\leq\la_i \leq 1$, for all~$i$
---thus allowing no more than one electron in each quantum state.
Forty years ago Borland and Dennis~\cite{losprecursores} observed for
the rank-six approximation of a three-electron system, whose state
space is here denoted $\wedge^3\H_6$, that the six occupation numbers
satisfy the additional constraints $\la_{r}+\la_{7-r}=1$, where
$r\in\{1, 2,3\}$, allowing exactly one electron in the natural
orbitals $r$ and $7-r$. Moreover,
\begin{align}
\la_4 \leq \la_5 + \la_6.
\label{eq:constraints}
\end{align}
The proofs are given in~\cite{seapareciomaria}. Thanks to outstanding
work by Klyachko and others in the last few years, actually solving
the pure state $N$-representability problem for the one-body reduced
density matrix~\cite{Alturulato}, the pattern of the occupation
numbers has received renewed attention. Large sets of inequalities for
the eigenvalues of this matrix, widely
generalizing~\eqref{eq:constraints}, have been established. We note
that, while the pure $N$-representability problem for the two-body
reduced density matrix remains unsolved, the ensemble
$N$-representability problem for this matrix is now solved
\cite{Mazziotti}.

A recent article~\cite{pinning} proposes to give an analytic study
of the Klyachko conditions by means of a toy model: a one-dimensional
system of \textit{three spinless} fermions confined to a harmonic
well, interacting with each other through Hooke-type forces. A series
formula for the occupation numbers in terms of the coupling was found.

The tantalizing suggestion in \cite{pinning} is that the inequalities
are \textit{nearly saturated} in the ground state [i.e., in equations
like Eq.~(\ref{eq:constraints}) the equality almost holds]: this is
the ``\textbf{quasi-pinning}'' phenomenon, which points to a deep hold
on the kinematics of the system. Schilling \textit{et
al}~\cite{pinning} state ``\ldots \textsl{it is likely extremely
challenging to use numerical methods to distinguish between genuinely
pinned and mere quasi-pinned states}''.

In the work we report here, we have taken up this challenge by
studying the ground state of lithium-like ions, starting from scratch
with an elementary configuration interaction (CI) method, up to a
rank-eight approximation (here, the rank equals the number of basis
functions in setting up the CI expansion). This procedure serves a
twofold purpose. First, we shall study whether the conclusions of
Schilling {\it et al.} \cite{pinning} are valid for realistic systems,
too. There now exists a profound measure of quantum entanglement for
three-fermion systems in rank six~\cite{magyares}. A second goal of
the present work is therefore to adapt this measure to our physical
spin-partitioned systems, contrasting the results with the information
on entanglement traditionally provided by the~$\la_i$.

In the present paper we shall present our analysis and results as follows.

Section~\ref{sec:rankfive} gives a simple introduction to the problem
at hand. We discuss in some detail the one- and two-body matrices in
the relatively trivial approximation of rank five to the lithium-like
ground states $\wedge^3\H_5$.

In Section~\ref{sec:LV} we broach the subject of entanglement for our
systems. This will allow us to discuss subsequently the information-theoretic
meaning of pinning and quasi-pinning.

Section \ref{sec:seriousranks1} deals with the first non-trivial
approximation to the three-electron system (of rank six). We use two
different basis sets, and the comparison of the results turns out to be 
very instructive.

Section~\ref{sec:seriousranks2} analyzes the more complicated
cases of rank seven and eight approximations.
Finally, Section~\ref{sec:conclusion} summarizes our conclusion.

We emphasize that the calculated energies merely are used to provide 
information on the quality of our approximations and, accordingly, on how 
accurate our calculated occupation numbers are, including our conclusions
with regard to the quasi-pinning hypothesis. Our goal is to grapple
with its impact on chemistry, and to investigate the negative
correlation between it and entanglement. Thus we refrain completely from
gaining extra accuracy of machine calculations at the price of losing
insight.

In two appendices we give some additional mathematical information.
The first discusses the ideas behind the Klyachko constraints in
ordinary quantum chemical language. The second gives the proof of an
estimate that we shall present in Section~\ref{sec:seriousranks2}.

Finally, throughout this work we use Hartree atomic units.

% \S 2
\section{The simplest case: a rank-five configuration for lithium-alikes}
\label{sec:rankfive}

Consider a system of~$N$ electrons and $M$~spin orbitals
$\{\vf_i(\xx)\}^{M}_{i=1}$, each being a product of a spatial orbital
and a spinor. We employ the standard quantum-chemical notation $\xx :=
(\rr,\vs)$ and use the notational convention: $\vf_i (\xx) :=
\phi_{i}(\rr) \,\vs$, with $\vs \in \{\up,\dn\}$. The number of
configurations~$N_c$ that can be constructed from $M$ spin orbitals
for $N$ electrons and $M-N$ holes is
$$
N_c = \binom{M}{N},
$$
which grows as a factorial with $M$. Here, we assume that we have
identified a set of \textit{basis functions}, largely under the
guidance of the physical or chemical intuition~\cite{Saarland}, that
provides an accurate description of the system of our interest. For
the $N$-electron wave function, we use wave functions made of
normalized Slater determinants,
$$
|\Psi\rangle = \sum_J C_J \, [\vf_{J(1)} \cdots \vf_{J(N)}].
$$
With the exterior algebra notation, this becomes 
$$
[\vf_1 \vf_2 \cdots \vf_N] =: \tfrac1{\sqrt{N!}} | \vf_1 \rangle \cu |
\vf_2 \rangle \cu \cdots \cu | \vf _N\rangle.
$$
In general, we assume that the $\vf_i$ have been
orthonormalized, although we occasionally relate them to non-orthogonal
orbitals by
\begin{equation}
\vf_i(\xx) = \sum^{L}_{j=1}R_{ij} \, \psi_j(\rr,\varsigma).
\label{eq:matrizr}
\end{equation}

We define the following energy integrals:
\begin{align}
\kappa_{mn} &:= \int \frac{\vf_{m} (\xx) \, \vf_{n} (\xx)}{|\rr|} \,
d\xx, \qquad \pi_{mn} := -\frac12 \int \vf_{m} (\xx) \, \nabla^{2}_{\rr}
\vf_{n} (\xx) \, d\xx
\notag \\ 
\iota_{mnop} &:= \int \frac{\vf_{m} (\xx_1) \, \vf_{n} (\xx_1) \,
\vf_{o} (\xx_2) \, \vf_{p} (\xx_2)}{|\rr_1 - \rr_2|} \,
d\xx_1 \, d\xx_2 \notag;  
\notag \\
K_{mn} &:= \int \frac{\psi_{m} (\xx) \, \psi_{n} (\xx)}{|\rr|} \, d\xx,
\qquad P_{mn} := -\frac12 \int \psi_{m} (\xx) \, \nabla_{\rr}^{2} \psi_{n}
(\xx) \, d\xx \notag
\\
\Up_{mnop} &:= \int \frac{\psi_{m} (\xx_1) \, \psi_{n} (\xx_1) \,
\psi_{o} (\xx_2) \, \psi_{p} (\xx_2)}{|\rr_1 - \rr_2|} \, d\xx_1 \,
d\xx_2.
\end{align}
{}From one set of integrals one can construct other sets by means of 
the relations $\kappa= (R \ox R)\,K$, $\pi=(R \ox R)\,P$ and $\io=(R
\ox R\ox R\ox R)\,\Up$, where $R$ is the transformation matrix
in~\eqref{eq:matrizr}.

% \S 2.1
\subsection{A simple starting configuration}

Given its low ionization potential ($\simeq0.198$ au), it is natural
to explore radial configurations of the open-shell lithium atom with a
single-determinant composition of (a) two restricted helium-like spin
orbitals --- in turn motivated by the classical analysis by Shull and
L\"owdin~\cite{SL2} of the natural orbitals for spin singlet states
of~$\mathrm{He}$ --- and (b) one hydrogen-like, in a suitably general
sense. Specifically, in such a single configuration we use the Kellner
\textit{Ansatz} for the helium-like functions,
$$
\psi_1(\al, \rr) = \sqrt{\frac{\al^3}{\pi}}\ e^{-\al r}.
$$
For the spinor of the hydrogen-like function we have arbitrarily chosen $\dn$.
For the spatial orbital, typical textbook calculations can be used for the
$s$-orbital in the $L$-shell:
$$
\psi^s_3(\ga,\rr) = \frac14\sqrt{\frac{\gamma^3}{2\pi}}\,L^{1}_{1}(\gamma r)
\,e^{-\gamma r/2} = \frac14\sqrt{\frac{\gamma^3}{2\pi}}\,(2 - \gamma r)
\,e^{-\gamma r/2}.
$$
Moreover, we consider also the following functions,
$$
\psi^p_3(\gamma,\rr) = \frac14\sqrt{\frac{\gamma^5}{6\pi}}r \,
e^{-\gamma r/2} \word{or} \psi^d_3(\gamma, \rr) =
\frac18\sqrt{\frac{\gamma^7}{45\pi}}r^2 \, e^{-\gamma r/2}.
$$
With these functions we obtain better results than with $\psi^s_3$;
see Table~\ref{table:litioCI0}. The better approximation among the
three, which includes $\phi^p_3$, leads to a total energy that equals
$99.19\%$ of the ``exact'' value. Comparing to the Hartree--Fock (HF)
energy given by the ``best'' Slater determinant, the error is less
than $0.2\%$ --- much more satisfactory than the Kellner approximation
for helium.

\begin{table}
\centering      
\begin{tabular}{c c c c}  
Conf & Energy (au) & $\al$ & $\ga$ \\ [0.5ex]  
\hline\hline           
``exact''                          & $-7.478060$ & $-$ & $-$ \\
$\mathrm{HF}$                      & $-7.432727$ & $-$ & $-$ \\
$[\psi_1\dn\psi_1\up\psi^s_3\dn]$  & $-7.393597$ & $2.679747$ & $1.868327$ \\
$[\psi_1\dn\psi_1\up\psi^d_3\dn]$  & $-7.416163$ & $2.691551$ & $1.892738$ \\
$[\psi_1\dn\psi_1\up\psi^p_3\dn]$  & $-7.417919$ & $2.686435$ & $1.274552$ \\
[1ex]   
\hline    
\end{tabular} 
\caption{The exact, HF and variational energy of $\mathrm{Li}$ in a
single-determinant configuration. Note the more substantial screening
of the outer electron by the inner ones when including $\phi^p_3$ in the basis.}
\label{table:litioCI0} 
\end{table}

For higher~$Z$ in the lithium series, the accuracy naturally improves,
although we shall not discuss this issue further here. Notice instead
that the $R$-matrix mentioned above is just a Gram--Schmidt
orthonormalization matrix, i.e.,
\begin{align*}
\begin{pmatrix}
\phi_1\up \\ \phi_1\dn \\ \phi_3\dn \end{pmatrix} &= R \begin{pmatrix}
\psi_1\up \\ \psi_1\dn \\ \psi_3\dn \end{pmatrix}, \word{where} R =
\begin{pmatrix} 1 & 0 & 0 \\ 0 & 1 & 0 \\ 0 & -\frac{\langle\psi_1|
\psi_3\rangle}{\sqrt{1 - |\langle\psi_1|\psi_3 \rangle|^2}} &
\frac1{\sqrt{1 - |\langle\psi_1|\psi_3 \rangle|^2}} \end{pmatrix}.
\end{align*}
In order to simplify the presentation, we shall not give below the
explicit forms of such matrices.

% \S 2.2
\subsection{The rank-five computation}

We obtain the rank-five approximation by using two helium-like
one-particle wave functions and one hydrogen-like. Still being guided
by \cite{SL2}, for the former we add the following function of the set
(orthonormal on the ordinary space):
$$
\dl_n(r) := D_n\sqrt{\frac{\al^3}{\pi}}L^2_{n-1}\bigl(2\al
r\bigr)e^{-\al r}; \quad n = 1,2,\ldots
$$
where $D^{-2}_n=\binom{n-1}{2}$, and the associated Laguerre
polynomials $L^\zeta_n$ are as defined in~\cite{reliablerussian}. We
have thus
$$
\dl_2(\al, \rr) := \sqrt{\frac{\al^3}{3\pi}}L^2_1\bigl(2\al
r\bigr)e^{-\al r}.
$$
We shall adopt the following notation for an orthonormalized basis set
of the restricted spin-orbital type:
\begin{align*}
|1\rangle:=\vf^p_3\dn, \; |2\rangle := \dl_1\dn, \; |3\rangle &:=
\dl_2\dn, \; |4\rangle:=\dl_1\up, \; |5\rangle:=\dl_2\up;
\\[2\jot]
\word{where}
\begin{pmatrix}
|1\rangle \\ |2\rangle \\ |3\rangle \\ |4\rangle \\ |5\rangle
\end{pmatrix} &= R \begin{pmatrix} \psi^p_3\dn \\ \dl_1 \dn \\
\dl_2\dn \\ \dl_1 \up \\ \dl_2\up \end{pmatrix}.
\end{align*}
With rank five, one has in principle $10=\tbinom53$ Slater
determinants. However, since the adopted Hamiltonian is independent of
the spin coordinates, only pure spin states are physically meaningful.
Obviously, there are only six determinants which are eigenvectors of
the operator $S_z$, namely,
\begin{align}
[124], \ [134], \ [125], \ [135], \ [234], \ [235].
\label{estadossz}
\end{align}
The total spin operator $S^2$ can be written as $S_-S_+ + S_z +
S^2_z$. It is clear that the states in \eqref{estadossz} are
eigenstates of the operator $S_z$ (and consequently of~$S_z^2$).
However, it is less clear whether they are eigenstates of
$S_-S_+$, too. It is easy to show that the wave function
\begin{align*}
|\Psi\rangle = A[124] + B[134] + C[125] + D[135] + E[234] + F[235]
\end{align*}
satisfies
$$
S_- S_+ |\Psi\rangle - |\Psi\rangle \propto (B-C)
\bigl([134] + [1'23] + [125]\bigr),
$$
where $|1'\rangle$ is a spin-up counterpart of $|1\rangle$.
Therefore $S_- S_+ |\Psi\rangle = |\Psi\rangle$ and
$S^2|\Psi\rangle=\frac{3}{4} |\Psi\rangle$ if and only if $B=C$.

Throughout the remaining parts of this paper, we have used a similar
approach to identify those spin-adapted combinations of Slater
determinants, that are eigenfunctions to $S^2$ and, accordingly, are
not ``spin-contaminated'' states.

Finally, the normalized wave function is written as
\begin{gather}
A [124] + B [125] + B[134] + D [135] + E [234] + F
[235],
\label{eq:wave5} \\
\word{with}  |A|^2 + 2|B|^2 + |D|^2 + |E|^2 + |F|^2 = 1.
\notag
\end{gather}
With rows and columns indexed by $\{1,\dots,5\}$, the corresponding
one-body density matrix is expressed by the matrix
{\scriptsize 
\renewcommand{\arraycolsep}{3pt} %% narrower matrix columns here
$$
\begin{pmatrix}
|A|^2 + 2|B|^2 + |D|^2  & BE^* + D F^* & -AE^* - BF^* & 0 & 0 \\
B^*E+D^*F & |A|^2 + |B|^2 + |E|^2+|F|^2 & AB^*+BD^* & 0 & 0 \\
-A^*E-B^*F & A^*B + B^*D & |B|^2 + |D|^2 + |E|^2 +|F|^2 & 0 & 0 \\
0 & 0 & 0 & |A|^2 + |B|^2 + |E|^2 & AB^* + BD^* + EF^* \\
0 & 0 & 0 & A^*B + B^*D + E^*F & |B|^2 + |D|^2 + |F|^2    
\end{pmatrix}\!.
$$
}%
In our case, 
$$
\rho_1(\xx_1,\xx_1') = 3
\int\Psi(\xx_1,\xx_2,\xx_3)\Psi^*(\xx'_1,\xx_2,\xx_3)\,d\xx_2\,d\xx_3.
$$
We can now conclude that only combinations of the form
$\ket{[abc]}\bra{[dbc]}$ will contribute (where the order of $a$, $b$,
and $c$, as well as of $d$, $b$, and $c$ can be changed when
simultaneously taking the appropriate signs into account). For
instance, $\ket{[124]}\bra{[125]}$ contributes with $AB^*$ to the $45$
matrix entry; $\ket{[134]}\bra{[234]}$ contributes with $-BE^*$ to the
$12$ entry, and so on. Note that the trace of
this matrix is equal to~3, as it should be.%
\footnote{This is a result of the global multiplication by the factor
equal to the number of electrons; as well as a division by 3!, coming
from the appropriate constant of the determinants; and the fact that
each multiplication of two Slater terms contributes~twice.}

We thus have $\la_1 + \la_2 + \la_{3'}= 2$ and $\la_{4'} + \la_5
= 1$ for the natural occupation numbers; the primes in the notation
are due to them being not yet in decreasing order.

By definition, in the basis of natural orbitals $\{\ket{\al_i}\}$, the
one-body density matrix is diagonal: $ \rho_1 = \sum^5_{i=1} \la_i\,
|\al_i\rangle \langle\al_i|$, already assuming that the occupation
numbers are arranged in decreasing order by interchanging $\la_{3'}$
with~$\la_{4'}$. Therefore, it is evident that a strong \textit{selection
rule} applies: we can rewrite the wave function for a three-electron
system in rank~five in terms of only \textit{two} configurations:
\begin{align}
|\Psi\rangle_{3,5} = a [\al_1\al_2\al_3] + d[\al_1\al_4\al_5]; \quad
|a|^2 + |d|^2 = 1, \; \la_2 = \la_3 = |a|^2 \geq |d|^2 = \la_4 =
\la_5.
\label{eq:rankfunf}
\end{align}

Through this example we have given a simple proof of a theorem stated
by Coleman \cite{themaster}. A more sophisticated proof is found in
\cite[Cor.~2]{chinochano}.

% \S 2.3
\subsection{Spectral analysis of the $n$-body and $n$-hole density
matrices on $\wedge^3\H_5$}

According to the Schmidt--Carlson--Keller duality\cite{themaster},
when applied to a three-electron system, the nonzero eigenvalues as
well as their multiplicities are the same for the one- and the
two-body matrices, i.e.,
\begin{align*}
\rho_2 = \sum^5_{i=1} \la_i \, |\om_i\rangle \langle\om_i|,
\word{where} c_j |\om_j\rangle := 3 \int \Psi(\xx_1,\xx_2,\xx_3)
\al_j^*(\xx_3)\ d\xx_3 \word{with} |c_j|^2 = \la_j.
\end{align*}
Thus, the eigenvectors of the two-body matrix associated to the wave
function~\eqref{eq:rankfunf} are given by
\begin{align*}
|\om_1\rangle = a [\al_2\al_3] + d [\al_4\al_5], \quad |\om_2\rangle =
[\al_1\al_3], \quad |\om_3\rangle = [\al_1\al_2], \quad |\om_4\rangle
= [\al_1\al_5], \quad |\om_5\rangle = [\al_1\al_4].
\end{align*}

For a system of $N$ particles and $M-N$ holes, the $n$-hole matrix is
hermitian and antisymmetric in each set of subindices, similar to what
is the case for the $n$-particle matrix. Additionally, it satisfies
the normalization conditions and sum rules,
$$
\Tr\eta_n = \binom{M-N}{n}; \qquad \int \eta_n \, d\xx_n =
\frac{M-N-n}n \, \eta_{n-1}.
$$
In the natural orbital basis, the one-hole matrix becomes
$$
\eta_1 = \sum^M_{i=1} (1-\la_i)\, |\al_i\rangle \langle\al_i|,
\word{with} \Tr \eta_1 = M-N,
$$
i.e., $M-N=5-3=2$ in our case; while the two-hole matrix is the
$Q$-matrix of lore, which for the lithium in the rank-five
approximation is
$$
\eta_2 = \sum^5_{i=1} \mu_i \, |h_i\rangle \langle h_i| = |h_1\rangle
\langle h_1|,
$$
where $\mu_i = 0$ if $ |\om_i\rangle$ is a single determinant and 
otherwise $\mu_i = \la_i$. Here, $|h_1\rangle := d\,[\al_2\al_3] + a\,
[\al_4\al_5]$. Note that $\eta_2$ is idempotent:
$$
\eta_2^2 = \bigl(|h_1\rangle \langle h_1|\bigr)^2 = \eta_2
\word{because} \langle h_1| h_1 \rangle = 1.
$$

% \S 3
\section{Preliminary discussion of entanglement in $\bigwedge^3\H_6$}
\label{sec:LV}

We consider two different approaches for obtaining six-rank
approximations for lithium-like ions. One is to work in a scheme of
fully restricted spin orbitals. Then, the sixth molecular orbital is chosen
as $\psi^p_3\,\up$. An alternative is to include $\dl_3\,\dn$ instead.

For convenience, we use the notation
\begin{equation}
\begin{pmatrix}
|1\rangle \\ |2\rangle \\ |3\rangle \\ |4\rangle \\ |5\rangle \\
|6\rangle \end{pmatrix} = R \begin{pmatrix} \dl_1\up \\ \dl_1\,\dn \\
\psi^p_3\,\dn \\ \dl_2\,\dn \\ \dl_2\up \\ \psi^p_3\,\up \end{pmatrix};
\qquad
\begin{pmatrix} |1\rangle \\ |2\rangle \\ |3\rangle \\ |4\rangle \\
|5\rangle \\ |6\rangle \end{pmatrix} = R \begin{pmatrix}
\dl_1\up \\ \dl_1\,\dn \\ \psi^p_3\,\dn \\ \dl_2\,\dn \\ \dl_2 \up \\
\dl_3\,\dn \end{pmatrix};
\label{eq:alabado}
\end{equation}
respectively, for the two cases.

Before discussing the two approaches in detail in the next section,
it is useful to first discuss the relations between 
chemistry and entanglement in each case, in the light
of~\cite{magyares} and of the quite recent analysis of universal
subspaces for fermionic systems~\cite{chinochano}. Without doubt, the
search for an entanglement measure for multipartite systems is among
the most important challenges facing quantum information theory
\cite{Enredados}. For three fermions, there have been some attempts to
generalize the Schmidt decomposition, widely used in bipartite
systems. Both~\cite{magyares} and~\cite{chinochano} focus on rank-six
descriptions, since these are the lowest non-trivial ones for
tripartite systems.

Neither of the choices made in those papers is well adapted to the
needs of chemistry, the first being too general, and the second too
restrictive. The measure of entanglement proposed in~\cite{magyares}
on the basis of cubic Jordan algebra theory does not take account of
spin-partitioning. A wave function $|\Phi\rangle$ belonging to the
abstract twenty-dimensional Hilbert space $\wedge^3\H_6$ is
considered. Given an ordered basis of $\wedge^3\H_6$ and
\begin{align}
|\Phi\rangle = \sum_{1\le i < j < k\le 6}  c_{ijk} [ijk],
\label{eq:vrana}
\end{align}
its amount of entanglement is analyzed in terms of the absolute value
of the expression
$$
\mathcal{T} : = 4 \bigl\{[\Tr(M_1 M_2) - \mu\nu]^2 - 4 \Tr(M_1^\#
M_2^\#) + 4\mu \det M_1 + 4\nu\det M_2\bigr\} \word{\!with\!} 0\leq
|\mathcal{T}| \leq 1,
$$
where the twenty amplitudes of \eqref{eq:vrana} are arranged in two
$3\x 3$ matrices and two scalars,
$$
M_1 := \begin{pmatrix} c_{156} & -c_{146} & c_{145} \\
c_{256} & -c_{246} & c_{245} \\
c_{356} & -c_{346} & c_{345} \end{pmatrix} ,
\quad M_2 :=
\begin{pmatrix} c_{234} & -c_{134} & c_{124} \\
c_{235} & -c_{135} & c_{125} \\
c_{236} & -c_{136} & c_{126}         
\end{pmatrix},
\quad  \mu :=  c_{123} \word{and}  \nu :=  c_{456}.   
$$
Here, $M^\#$ denotes the adjugate of a matrix~$M$, such that $MM^\#
=M^\#M=(\det M)I$. Under this measure, non-trivial tripartite
entanglement can take place in two inequivalent ways: those
with~$|\mathcal{T}| \neq0$ and those with~$|\mathcal{T}|=0$ ---
provided that then a pertinent dual wavefunction $\tilde\Phi$ is
different from zero. Although both cases exhibit genuine tripartite
entanglement (they are neither separable nor biseparable), there is no
unitary transformation relating the two types of states. The lowest
configuration of the energy with the basis set $\{\psi^p_3 \dn,
\psi^p_3 \up, \dl_1 \up, \dl_1 \dn, \dl_2 \up, \dl_2 \dn\}$ considered
in the first part of this chapter results in a $\mathcal{T}$-measure
of entanglement equal to~\textbf{\textit{zero}}. In contrast, the wave
function constructed from $\{\psi^p_3\dn,\dl_1 \up,\dl_1\dn,\dl_2
\up,\dl_2\dn,\dl_3\dn\}$ results in $\mathcal{T}$-entanglement equal
to~$2.57 \x 10^{-6}$ (admittedly small, due to \textit{quasi-pinning},
as we explain later); which in particular means that entanglement-wise
pinned states and unpinned ones are mutually disconnected.

On the other hand, the framework of the analysis in~\cite{chinochano}
is applicable for only the first of the two configurations mentioned
in~\eqref{eq:alabado}.

% \S 4
\section{Rank-six approximations}
\label{sec:seriousranks1}

% \S 4.1
\subsection{Choosing two configurations}

It is readily seen that for the first basis set in \eqref{eq:alabado},
out of $20=\binom63$ Slater determinants there are nine eigenfunctions
of~$S_z$ with eigenvalue~$\dn$,
\begin{equation}
[123], \; [124], \; [245], \; [345], \; [236], \; [346], \; [134], \;
[246], \; [235].
\label{eq:notation1}
\end{equation}
The first six Slater determinants are eigenvectors of~$S^2$, which
also is true for the combinations
$$
[134] + [246] \word{and} [235] - [134].
$$
Consider thus the following wavefunctions
\begin{align*}
A[123] + B\big([235] - [134]\big) + E[124] + F[245] + D[345] + G[236] +
H[346] + I\big([246] + [134]\big).
\end{align*}
The notation corresponds to that of~\eqref{eq:wave5}, with,
however, a numbering change. It is easy to see that the
corresponding one-body matrix has the following spin structure
$$
\rho_1 = \rho^\up_1 \oplus \rho^\dn_1
$$
whereby, with respective indices $\{1,5,6\}$ and $\{2,3,4\}$,
\begin{align}
\rho^\up_1 &= \begin{pmatrix} |A|^2 + |B|^2 + |E|^2 + |I|^2 & \star & \star
\\ \star & |B|^2 + |D|^2 + |F|^2 & \star \\ \star & \star & |G|^2 +
|H|^2 + |I|^2 \end{pmatrix}\!, \quad \Tr\rho^\up_1 = 1;
\notag \\[\jot]
\rho^\dn_1 &= \mbox{\scriptsize$\begin{pmatrix}
|A|^2 + |B|^2 + |E|^2 + |F|^2 + |G|^2 + |I|^2 & \star
& \star \\[2\jot]
\star & \hspace*{-4em} 
|A|^2 + 2|B|^2 + |D|^2 + |G|^2 + |H|^2 + |I|^2 \hspace*{-4em}
& \star \\[2\jot]
\star & \star & |B|^2 + |D|^2 + |E|^2 +|F|^2 + |H|^2 + 2|I|^2
\end{pmatrix}$},
\notag \\[\jot]
&\qquad \Tr\rho^\dn_1 = 2.
\label{eq:shouldbeclearbutmaybeisnot}
\end{align}

For the second basis system in \eqref{eq:alabado}, among the 20~Slater
determinants there are now twelve eigenfunctions of the operator $S^z$
with eigenvalue~$\dn$, namely,
$$
[123], \; [124], \; [245], \; [345], \; [134], \; [235], \; [146], \;
[256], \; [136], \; [356], \; [126], \; [456].
$$
Here, we shall not write explicitly the general wavefunction that can
be constructed from those and that does not contain any spin
contamination.

Table~\ref{table:litioCIenergies} presents the results for the energy
and screening parameters, with $6^a$ and $6^b$ respectively denoting
the first and second case in~\eqref{eq:alabado}. In the table we have also 
included the results for higher-rank approximations.%
\footnote{With our method it is necessary to reach rank~seven in order
to obtain part of the (radial) correlation energy. It is well known
that the best HF ground state for $\mathrm{Li}$ is given by an
unrestricted determinant.}

\begin{table}
\centering      
\begin{tabular}{c c c c}  
Rank  & Energy  & $\al$ & $\ga$ \\ [0.5ex] 
\hline\hline           
$3$   & $-7.417919$ & $2.686435$ & $1.274552$ \\
$5$   & $-7.431181$ & $2.711177$ & $1.304903$ \\
$6^a$ & $-7.431827$ & $2.674424$ & $1.319161$ \\
$6^b$ & $-7.431639$ & $2.712166$ & $1.323417$ \\
$7$   & $-7.445443$ & $2.772402$ & $1.336274$ \\
$8$   & $-7.454889$ & $2.767562$ & $1.331108$ \\
[1ex]   
\hline    
\end{tabular} 
\caption{Variational energy of $\mathrm{Li}$ in a CI picture for
different approximation rank.}
\label{table:litioCIenergies} 
\end{table}

Table~\ref{table:litioON678} gives the results for the natural
orbital occupancy numbers.

\begin{table}
 \centering      
{\footnotesize
\begin{tabular}{c c c c c c c c c }  
Rank & $\la_1$  & $ \la_2$ & $\la_3$ & $\la_4\x 10^{3}$ & $\la_5\x 10^{3}$ &
$\la_6 \x 10^{4}$   & $\la_7 \x 10^{5}$  & $\la_8 \x 10^{6}$
\\[0.5ex] 
\hline\hline           
$5$   &   $1$      & $0.998702$ & $0.998702$ & $1.297058$ & $1.297058$ & $-$      & $-$ & $-$ \\
$6^a$ & $0.999978$ & $0.998677$ & $0.998655$ & $1.344195$ & $1.322335$ & $0.2185$ & $-$ & $-$ \\
$6^b$ & $0.999977$ & $0.998715$ & $0.998715$ & $1.284753$ & $1.284182$ & $0.2203$ & $-$ & $-$ \\
$7$   & $0.999868$ & $0.998629$ & $0.998511$ & $1.416148$ & $1.364978$ & $1.2336$ & $8.5241$ & $-$ \\
$8$   & $0.999839$ & $0.998663$ & $0.998522$ & $1.409339$ & $1.337846$ & $1.3972$ & $8.6559$ & $1.7232$
\\[1ex]   
\hline    
\end{tabular} }
\caption{Occupation numbers from ranks five to eight for 
lithium wave functions.}
\label{table:litioON678} 
\end{table}

The (four) Klyachko inequalities for a three-electron system
in a rank-six configuration read
$$
\la_1 + \la_6 \le 1, \quad
\la_2 + \la_5 \le 1, \quad
\la_3 + \la_4 \le 1; \quad
0 \le \D :=  \la_5 + \la_6 - \la_4.
$$
However, one must have $\sum_{i=1}^6\la_i=3$. As a consequence of
this, the first inequalities become saturated (the Borland--Dennis
identities), and there is only one inequality left for further
examination. Note that we can formulate this as
\begin{equation}
\la_1 + \la_2 \le 1 + \la_3.
\label{eq:suerte-o-verdad}
\end{equation}

Before analyzing $\D$, which is the main subject in this
subsection, we emphasize that the Borland--Dennis identities are
fulfilled within our numerical accuracy. Also, they imply that in
the natural orbital basis every Slater determinant is composed of
three orbitals $[\al_i\al_j\al_k]$, each belonging to one of three
different sets, say
$$
\al_i \in \{\al_1,\al_6\}, \quad \al_j \in \{\al_2,\al_5\} \word{and}
\al_k \in \{\al_3,\al_4\};
$$ 
that is $\wedge^3\,\H_6$ splits with a section equal to
$\H_2^{\ox3}$ for a system of three fermions with spin.

\smallskip

\textbf{\textit{Quasi-pinning}} is the property of $\D$ being
extremely close to zero. Within our calculation~$6^b$, we find
\begin{align}
0 \le \D =  \la_5 + \la_6 - \la_4 = 2.1465 \x 10^{-5}.
\label{eq:pinning6++}
\end{align}
This value of $\D$ is slightly smaller than the lowest occupation
number, $\D/\la_6\approx 0.97$. $\D$~cannot exceed $\la_6$, because
otherwise $\la_5 >\la_4$. More remarkable is that for the restricted
determinant case $6^a$ one has $\D$ of order $10^{-12}$, i.e., 0
within numerical accuracy.

The inequality~\eqref{eq:suerte-o-verdad} together with the decreasing
ordering rule define a polytope~(Fig.~\ref{graf:polytope}) in the
space of the occupancy numbers.

\begin{figure}[ht] 
\centering
\includegraphics[width=8cm]{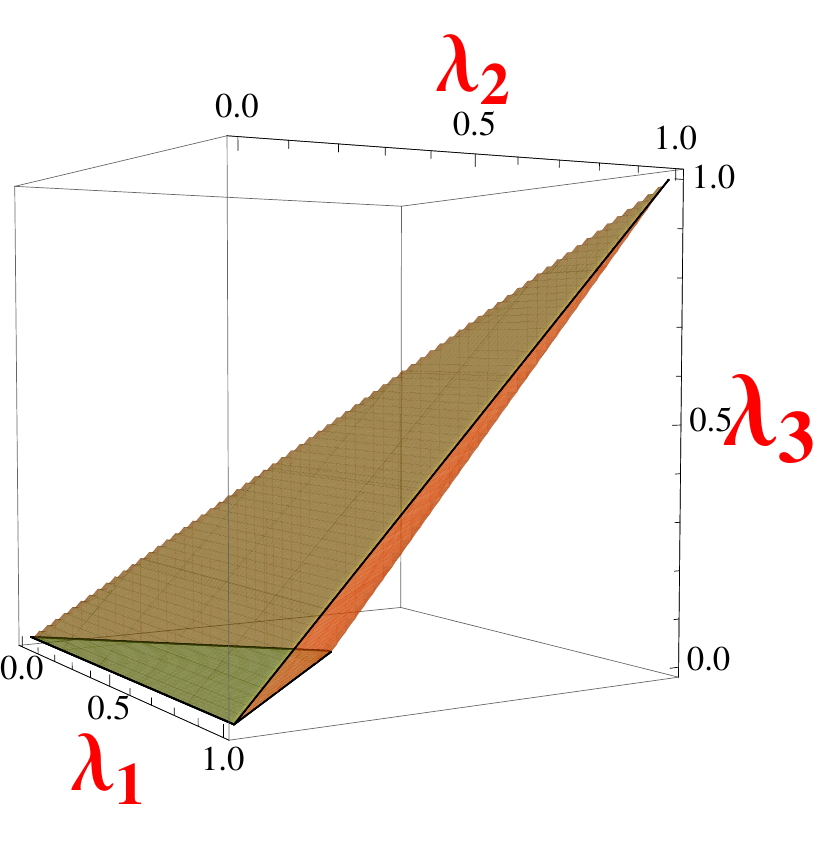}
\caption{(Color online) Polytope defined by the expression $\la_1 + \la_2 \le 1 +
\la_3$, subject to the condition $1 \ge \la_1 \ge \la_2 \ge \la_3 \ge
0$. The saturation condition $\la_1 + \la_2 = 1 + \la_3$ is satisfied
by the points on one of the faces of the polytope, the one with edges
$\la_2 = \la_3$ for $\la_1 = 1$ and $\la_1 = 1-\la_2$ for $\la_3 = 0$.
The single determinant state is placed at the corner $\la_1 = \la_2 = \la_3
= 1$ of the polytope. The physical ground states appear to be (close to)
saturated}
\label{graf:polytope}
\end{figure}

So far, a number of findings and conclusions can be emphasized:

\begin{itemize}

\item{} The energy computed via the restricted basis set $6^a$ is
(marginally) better than that obtained via~$6^b$.

\item{} Quasi-pinning is ``strict'' for $6^a$ --- in fact we do have
\textbf{pinning}--- and ``lax'' for~$6^b$. Indeed, equation
\eqref{eq:pinning6++} is still remarkable in absolute terms. But it
just means that if the system is close to a vertex, it is close to a
face.

\item{} Both states are truly entangled ---neither separable nor
biseparable. However, the $\mathcal T$-measure of entanglement
\textit{vanishes} for $6^a$, while $\mathcal T\ne0$ for $6^b$. Thus,
in some sense the latter is ``more entangled'' than the former. In
fact, referring to the original notation~\eqref{eq:notation1}, for the
case $6^a$ we have the expressions:
$$
M_1 = 
\begin{pmatrix}
0 & 0 & 0 \\
0 & -c_{246} & c_{245} \\
0 & -c_{346} & c_{345}         
\end{pmatrix} , \quad
M_2 = 
\begin{pmatrix}
0 & -c_{134} & c_{124} \\
c_{235} & 0 & 0 \\
c_{236} & 0 & 0       
\end{pmatrix},
\quad  \mu =  c_{123} \word{and}  \nu =  0,
$$
and hence
$$
\mathcal{T} = 4 \{[\Tr(M_1 M_2) - \mu\nu]^2 - 4 \Tr(M_1^\# M_2^\#) +
4\mu \det M_1 + 4\nu \det M_2\}  = 0.
$$

For the case $6^b$, again referring to the original
notation~\eqref{eq:notation1}, we deal with
$$
M_1 = \begin{pmatrix} 0 & -c_{146} & 0 \\ c_{256} & 0 & c_{245} \\
0 & 0 & c_{345} \end{pmatrix} , \quad M_2 = \begin{pmatrix} 0 &
-c_{134} & c_{124} \\ c_{235} & 0 & 0 \\ 0 & 0 & c_{126}
\end{pmatrix}, \quad \mu = c_{123} \word{and} \nu = c_{456},
$$
and hence
\begin{align*}
\mathcal{T} &= 4 \{[\Tr(M_1 M_2) - \mu\nu]^2 - 4 \Tr(M_1^\# M_2^\#) +
4\mu \det M_1 + 4\nu \det M_2\} 
\\
&= 4 \{(-c_{146} c_{235} - c_{134} c_{256} + c_{126} c_{345}
- c_{123} c_{456})^2 - 4(c_{134} c_{146} c_{235} c_{256}
- c_{126} c_{146} c_{235} c_{245} \\
&\qquad - c_{126} c_{134} c_{256} c_{345}) +
4 c_{123} c_{146} c_{256} c_{345} + 4 c_{456} c_{126} c_{134} c_{235}\}
= -2.5718 \x 10^{-6}.
\end{align*}

\item{} It is accordingly natural to conjecture, as already done
in~\cite{pinning}, that pinning leads to qualitative differences in
multipartite entanglement, and quasi-pinning correlates
\textit{negatively} with entanglement.

\item{} Computing entanglement by means of the standard Jaynes entropy
$-\sum_i\la_i\ln\la_i$ we obtain for the restricted configuration
$2.05 \x 10^{-2}$ and $1.99 \x 10^{-2}$ for the partially unrestricted
one. Admittedly, these two values are close but nevertheless it would
seem to \textit{contrarily} indicate that $6^b$ is ``less entangled''
than~$6^a$. In total, this suggests that there is a need to
identify genuine multipartite measures of entanglement. The recent
proposal~\cite{los4mosqueteros} looks enticing in this respect.

\item{} When the fourth inequality saturates ($\D=0$), a strong
selection rule like~\eqref{eq:rankfunf} applies, namely, the number of
Slater determinants reduces to three:
\begin{align}
|\Psi\rangle_{3,6} = a[\al_1\al_2\al_3] + b[\al_1\al_4\al_5] +
c[\al_2\al_4\al_6] .
\label{eq:sixpinned}
\end{align}
It should be clear that $\{\al_1,\al_2,\al_4\},\,\{\al_3,\al_5,
\al_6\}$ respectively span the spaces on which $\rho_1^\up,
\,\rho_1^\dn$ in~\eqref{eq:shouldbeclearbutmaybeisnot} act. The
natural occupation numbers for this wavefunction are of the form:
$$
\la_1 = |a|^2 + |b|^2, \quad \la_2 = |a|^2 + |c|^2, \quad
\la_3 = |a|^2, \quad \la_4 = |b|^2 + |c|^2. \quad
\la_5 = |b|^2, \quad \la_6 = |c|^2.
$$

\item{} When employing a restricted basis set, there is no loss of
information in working with the wave function~\eqref{eq:sixpinned}.
Even in the general case, at rank six simultaneous variation of
orbitals and coefficients is still a tempting proposition for the
lithium series, in view of the following. The possible loss of
information when projecting the total wave function onto this subspace
of pinned states has been computed~\cite{supple}. Given a wave
function $|\Phi\rangle \in \wedge^3 \H_6$, and letting $P$ be the
projection operator onto the subspace spanned by the Slater
determinants $[\al_1\al_2\al_3], \,[\al_1\al_4\al_5]$ and
$[\al_2\al_4\al_6]$, we have the following upper and lower bounds for
this projection,
$$
1 - \frac{1+2\xi}{1-4\xi} \, \D \leq \|P\Phi\|^2_2 \leq 1 - \half\D,
\word{provided} \xi := 3 - \la_1 - \la_2 - \la_3 < \frac{1}{4}\,.
$$ 
Within our calculations the lower bound is larger than $99.997\%$.
Presumably, by dint of astute variation tactics one could obtain
extremely good values for the energy with just three Slater terms.

\item{} Finally, returning to the article~\cite{chinochano}, the
authors there correctly argue that their treatment of universal
subspaces gives an alternative proof for the Klyachko representability
conditions on $\wedge^3\H_6$. Conversely, the above gives an
independent proof of the assertions in~\cite{chinochano}, for the same
case.

\end{itemize}

% \S 4.2
\subsection{Reduced matrices on pinned $\wedge^3\,\H_6$}

As in the rank-five case, the one-body and one-hole matrices read
$\rho_1 = \sum^6_{i=1} \la_i\, |\al_i\rangle \langle\al_i|$ and
$\eta_1 = \sum^6_{i=1} (1-\la_i)\, |\al_i\rangle \langle\al_i|$.

The two-body and two-hole matrices are, respectively, written as
$$
\rho_2 = \sum^6_{i=1} \la_i\, |\om_i\rangle \langle\om_i|
\word{and} \eta_2 = \sum_{i=1}^6 \mu_i\, |h_i\rangle \langle h_i|
= \sum_{i\in\{1,2,4\}} \la_j\, |h_j\rangle \langle h_j|,
$$
with $|\om_i\rangle := \frac3{\sqrt{\la_i}}\langle\al_i|
\Psi\rangle$, and
\begin{align*}
|h_1\rangle &= \frac1{\sqrt{\la_1}} \bigl(b[\al_2\al_3] + a
[\al_4\al_5]\bigr), \quad |h_2\rangle = \frac1{\sqrt{\la_2}} \bigl(c
[\al_1\al_3] + a[\al_4\al_6]\bigr),
\\
|h_4\rangle &= \frac1{\sqrt{\la_4}}\bigl(c [\al_1\al_5] + b
[\al_2\al_6]\bigr);
\end{align*}
note that $|h_j\rangle=|\om_j\rangle$ for $i=3,5,6$ correspond to
single determinants. Moreover, $\eta_2^2=\eta_2$.

% \S 4.3
\subsection{$Z$-dependency of the quasi-pinning}

The dependence of the inequality \eqref{eq:pinning6++} on
the atomic number of the nucleus deserves some extra discussion. The first
occupation number will grow as the atomic charge in the nucleus
increases. Fig.~\ref{graf:Saturation} features the evolution of
the saturation when $Z$ takes values in $\{3,\ldots,12\}$. The most
relevant measure is $\D/\la_6$, which (mostly) \textit{decreases} with
$Z$. This means that the numerical distance between $\la_5$ and
$\la_4$ --- or between $\la_2$ and $\la_3$ --- is rapidly decreasing
with~$Z$.

\begin{figure}[ht] 
\centering
\begin{subfigure}[b]{.47\textwidth}
 \centering
 \includegraphics[width=\textwidth]{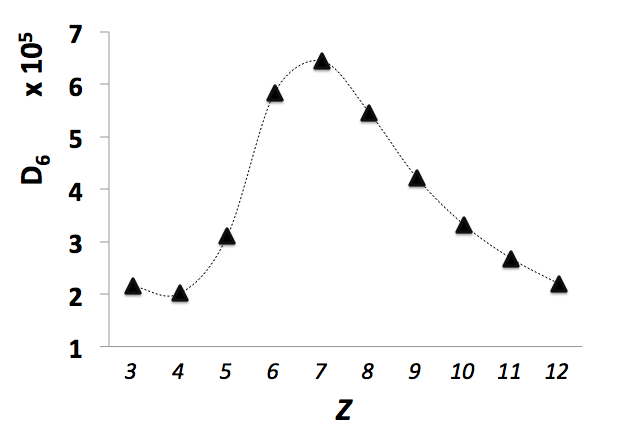}
\caption{} 
\end{subfigure}
\quad
%\hfill
\begin{subfigure}[b]{.47\textwidth}
\centering
\includegraphics[width=\textwidth]{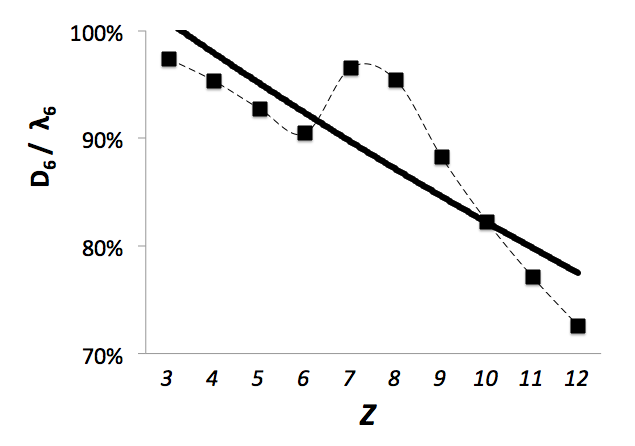}  
\caption{}
\end{subfigure}
\hfill
\caption{(a) $\D$ and (b) $\D/\la_6$ as functions of the atomic number
$Z$ for three-electron systems.}
\label{graf:Saturation}
\end{figure}

% \S 5
\section{Klyachko theory in higher-rank approximations}
\label{sec:seriousranks2}

% \S 5.1
\subsection{A rank-seven approximation}

We choose the basis set
$$
\begin{pmatrix} |1\rangle \\ |2\rangle \\ |3\rangle \\ |4\rangle \\
|5\rangle \\ |6\rangle \\ \ket7 \end{pmatrix} = R \begin{pmatrix}
\dl_1\up \\ \dl_1\,\dn \\ \psi^p_3\,\dn \\ \dl_2\,\dn \\ \dl_2 \up \\
\dl_3\,\dn \\ \dl_3\,\up \end{pmatrix},
$$
that is, we include also the spin-up counterpart $\ket7:= \dl_3\up$ of orbital
$\ket6$. In principle we have $35=\binom73$ Slater determinants, of
which eighteen have the total $S_z$ component equal to~$\dn$ and fifteen
are spin-adapted: the eight ones of the rank-six approximation~$6^b$ plus
$$
[267], \ [367], \ [567], \ [136] - [237], \ [346] - [357], \
[257] - [156], \ [257] - [246].
$$

There are four Klyachko inequalities for $\mathrm{Li}$ in a rank-seven
configuration:
\begin{align*}
\la_1 + \la_2 + \la_4 + \la_7 &\leq 2; \qquad \la_1 + \la_2 + \la_5 + 
\la_6 \leq 2;
\\
\la_2 + \la_3 + \la_4 + \la_5 &\leq 2; \qquad \la_1 + \la_3 + \la_4 +
\la_6 \leq 2.
\end{align*}
In our calculations we find 
\begin{align*}
0 \le \D^1_7 &= 2 - (\la_1 + \la_2 + \la_4 + \la_7) = 0, 
\\
0 \le \D^2_7 &= 2 - (\la_1 + \la_2 + \la_5 + \la_6) = 1.3045 \x 10^{-5},
\\
0 \le \D^3_7 &= 2 - (\la_2 + \la_3 + \la_4 + \la_5) = 7.7411 \x 10^{-5},
\\
0 \le \D^4_7 &= 2 - (\la_1 + \la_3 + \la_4 + \la_6) = 8.0025 \x 10^{-5}.
\end{align*}

There is a number of interesting issues on the structure of the constraints, which now shall be briefly
discussed.

\begin{itemize} 

\item{} The pinned system can be factorized,
$$
\wedge^3\H_7 \to \H_3 \ox \wedge^2\H_4.
$$
That is, the one-body matrix can be split into a direct sum of two
matrices,
$$
\rho_1 = \rho^\up_1 \oplus \rho^\dn_1.
$$
The first one is a $3 \x 3$ square matrix whose trace is equal to 1
and that is associated with the electron with spin pointing $\up$, while
the second matrix is a $4 \x 4$ square matrix whose trace is equal to two
and is associated with the two electrons with spin pointing~$\dn$.
With the numbering already dictated by the occupancies, its entries
read
\begin{align*}
\rho^\up_1(1,1) &= |c_{123} |^2 + |c_{124} |^2 + |c_{126} |^2 +
|c_{134} |^2 + |c_{146} |^2 + |c_{136} |^2
\\
\rho^\up_1(1,2) &= c_{123}c^*_{235} + c_{124}c^*_{245} -
c_{126}c^*_{256} + c_{134}c^*_{345} - c_{136}c^*_{356} -
c_{146}c^*_{456}
\\
\rho^\up_1(1,3) &= c_{123}c^*_{237} + c_{124}c^*_{247} +
c_{126}c^*_{267} + c_{134}c^*_{347} + c_{136}c^*_{367} +
c_{146}c^*_{467}
\\
\rho^\up_1(2,2) &= |c_{235} |^2 + |c_{245} |^2 + |c_{256} |^2 +
|c_{345} |^2 + |c_{356} |^2 + c_{456}
\\
\rho^\up_1(2,3) &= c_{235}c^*_{237} + c_{245}c^*_{247} -
c_{256}c^*_{267} + c_{345}c^*_{347} - c_{356}c^*_{367} -
c_{456}c^*_{467}
\\
\rho^\up_1(3,3) &= |c_{237} |^2 + |c_{247} |^2 + |c_{267} |^2 +
|c_{347} |^2 + |c_{367} |^2 + c_{467}
\end{align*} 
and 
\begin{align*}
\rho^\dn_1(1,1) &= |c_{123} |^2 + |c_{124} |^2 + |c_{126} |^2 +
|c_{235} |^2 + |c_{237} |^2 + |c_{245} |^2 + |c_{2dn47} |^2 + |c_{256}
|^2 + |c_{267} |^2
\\
\rho^\dn_1(1,2) &= c_{124}c^*_{134} + c_{126}c^*_{136} +
c_{245}c^*_{345} + c_{247}c^*_{347} + c_{256}c^*_{356} +
c_{267}c^*_{367}
\\
\rho^\dn_1(1,3) &= -c_{123}c^*_{134} + c_{126}c^*_{146} -
c_{235}c^*_{345} - c_{237}c^*_{347} + c_{256}c^*_{456} +
c_{267}c^*_{467}
\\
\rho^\dn_1(1,4) &= -c_{123}c^*_{136} - c_{124}c^*_{146} +
c_{235}c^*_{356} - c_{237}c^*_{367} + c_{245}c^*_{456} -
c_{247}c^*_{467}
\\
\rho^\dn_1(2,2) &= |c_{123} |^2 + |c_{134} |^2 + |c_{136} |^2 +
|c_{235} |^2 + |c_{237} |^2 + |c_{345} |^2 + |c_{347} |^2 + |c_{356}
|^2 + |c_{367} |^2 
\\
\rho^\dn_1(2,3) &= c_{123}c^*_{124} + c_{136}c^*_{146} +
c_{235}c^*_{245} + c_{237}c^*_{247} - c_{356}c^*_{456} +
c_{367}c^*_{467}
\\
\rho^\dn_1(2,4) &= c_{123}c^*_{126} - c_{134}c^*_{146} -
c_{235}c^*_{256} + c_{237}c^*_{267} + c_{345}c^*_{456} -
c_{347}c^*_{467}
\\
\rho^\dn_1(3,3) &= |c_{124} |^2 + |c_{134} |^2 + |c_{146} |^2 +
|c_{245} |^2 + |c_{247} |^2 + |c_{345} |^2 + |c_{347} |^2 + |c_{456}
|^2 + |c_{467} |^2
\\
\rho^\dn_1(3,4) &= c_{124}c^*_{126} + c_{134}c^*_{136} -
c_{245}c^*_{256} + c_{247}c^*_{267} - c_{345}c^*_{356} +
c_{347}c^*_{367}
\\
\rho^\dn_1(4,4) &= |c_{126} |^2 + |c_{136} |^2 + |c_{146} |^2 +
|c_{256} |^2 + |c_{267} |^2 + |c_{356} |^2 + |c_{367} |^2 + |c_{456}
|^2 + |c_{467} |^2.
\end{align*} 

\item{} For the first time we see the appearance of two \textit{scales}
of quasi-pinning.

\item{} If the second constraint were saturated, the
selection rule fixes the number of Slater determinants in the
decomposition of the wave function to be nine,
$$
[\al_1\al_2\al_3], [\al_1\al_4\al_5], [\al_1\al_4\al_6],
[\al_1\al_5\al_7], [\al_1\al_6\al_7], [\al_2\al_4\al_5],
[\al_2\al_4\al_6], [\al_2\al_5\al_7], [\al_2\al_6\al_7].
$$

\item{} As for the case of $\wedge^3\H_6$, the loss of information
when projecting the total wave function onto this nine-dimensional
subspace of twice pinned states can be estimated. In appendix B we give
a proof of the following theorem: let a wave function $|\Phi\rangle
\in \bigwedge^3\H_7$ with natural orbitals $\ket{\al_i}^7_{i=1}$,
occupation numbers $\{\la_i\}^7_{i=1}$, saturating the first
restriction. Moreover, let $P_7$ be the projection operator onto the subspace
spanned by the Slater determinants above. Then, the upper and
lower bounds of this projection are given by
\begin{align*}
1 - \frac{1 + 9\xi}{1 - 11\xi} \D^2_7 \leq \|P_7\Phi\|^2_2
\leq 1 - \half \D^2_7 \word{provided that} \xi < \frac{1}{11} \,.
\end{align*}
Within our calculations, 
$1 - \dfrac{1+9\xi}{1-11\xi}\,\D^2_7 = 1 - 1.3852 \x 10^{-5} = 99.9986\%$.

\item{} If, in addition, the third or the fourth constraint becomes
saturated, the selection rules decreases the number of allowed
determinants to just 4. Saturating both simultaneously 
reduces the case to the saturated rank-six wavefunction.

\end{itemize}

We omit the expressions of the two-body and two-hole matrices, which
can be easily calculated. It should, however, be added that the tensor
character under rotations of the reduced matrices for a three-electron
system is quite different from the one for a two-electron system; in
particular the relative weight in the lithium isoelectronic series of
the six components identified in~\cite[Sect.~6A]{Greatsmallbook}
or~\cite{Bellona} deserves some further study.

% \S 5.2
\subsection{Quasi-pinning displayed in the rank-eight approximation}

We can obtain rank eight by adding a new orbital $\ket8 := \dl_4\dn$,
giving now $\binom{3}{1}\binom{5}{2}=30$ Slater determinants
with the correct $z$-component of the spin. Among them,
$21$ are spin-adapted, i.e., the fifteen ones of the rank-seven
approximation, plus
$$
[128], \ [458], \ [678], \ [148] - [258], \ [168] - [278], \
[568] - [478].
$$

The number of Klyachko inequalities grows notably with the rank. We
find \textit{31}~inequalities in~\cite{Alturulato}. Of those, 28
constraints are displayed in our Table~\ref{table:M8}.

\begin{table}
\centering    
{\footnotesize  
\begin{tabular}{l r}  
\qquad \qquad \qquad \qquad Inequality  & Value $\x 10^{3}$  \\ [0.5ex] 
\hline\hline \\ [-0.9ex]    
$0 \le \D^{1}_8  = 2 - (\la_1 + \la_2 + \la_4 + \la_7)$ &  0.0017 \\
$0 \le \D^{2}_8  = 2 - (\la_1 + \la_2 + \la_5 + \la_6)$ &  0.0200 \\
$0 \le \D^{3}_8  = 2 - (\la_2 + \la_3 + \la_4 + \la_5)$ &  0.0671 \\
$0 \le \D^{4}_8  = 2 - (\la_1 + \la_3 + \la_4 + \la_6)$ &  0.0894 \\
\\
$0 \le \D^{5}_8  = 1 - (\la_1 + \la_2 - \la_3)$       &  0.0200 \\
$0 \le \D^{6}_8  = 1 - (\la_2 + \la_5 - \la_7)$       &  0.0854 \\
$0 \le \D^{7}_8  = 1 - (\la_1 + \la_6 - \la_7)$       &  0.1078 \\
$0 \le \D^{8}_8  = 1 - (\la_2 + \la_4 - \la_6)$       &  0.0671 \\
$0 \le \D^{9}_8  = 1 - (\la_1 + \la_4 - \la_5)$       &  0.0894 \\
$0 \le \D^{10}_8 = 1 - (\la_3 + \la_4 - \la_7)$       &  0.1548 \\
\\
$0 \le \D^{11}_8 = 1 - (\la_1 + \la_8)$               &  0.1592 \\
\\
$0 \le \D^{12}_8 = -(\la_2 - \la_3 - \la_6 - \la_7)$  &  0.0854 \\
$0 \le \D^{13}_8 = -(\la_4 - \la_5 - \la_6 - \la_7)$  &  0.1548 \\
$0 \le \D^{14}_8 = - (\la_1 - \la_3 - \la_5 - \la_7)$ &  0.1078 \\

$0 \le \D^{15}_8 = 2 - (\la_2 + \la_3 + 2\la_4 - \la_5 - \la_7 + \la_8)$ &  1.4183 \\
$0 \le \D^{16}_8 = 2 - (\la_1 + \la_3 + 2\la_4 - \la_5 - \la_6 + \la_8)$ &  0.2956 \\
$0 \le \D^{17}_8 = 2 - (\la_1 + 2\la_2 - \la_3 + \la_4 - \la_5 + \la_8)$ &  1.2836 \\
$0 \le \D^{18}_8 = 2 - (\la_1 + 2\la_2 - \la_3 + \la_5 - \la_6 + \la_8)$ &  0.1569 \\
\\
$0 \le \D^{19}_8 = - (\la_1+ \la_2 - 2\la_3 - \la_4 - \la_5)$  &  1.2897 \\
\\
$0 \le \D^{21}_8 = - (\la_1 - \la_3 - \la_4 - \la_5 + \la_8)$  &  1.4288 \\
\\
$0 \le \D^{23}_8 = 1 - (2\la_1 - \la_2 + \la_4 - 2\la_5 - \la_6 + \la_8)$  & 0.3894 \\
$0 \le \D^{24}_8 = 1 - (\la_3 + 2 \la_4 - 2\la_5 - \la_6 - \la_7 + \la_8)$ & 1.5591 \\
$0 \le \D^{25}_8 = 1 - (2\la_1 - \la_2 - \la_4 + \la_6 - 2\la_7 + \la_8)$  & 0.4262 \\
$0 \le \D^{26}_8 = 1 - (2\la_1 + \la_2 - 2 \la_3 - \la_4 - \la_6 + \la_8)$ & 0.2507 \\
$0 \le \D^{27}_8 = 1 - (\la_1 + 2\la_2 - 2 \la_3 - \la_5 - \la_6 + \la_8)$ & 1.3551 \\
\\
$0 \le \D^{29}_8 = \la_1 - \la_3 - 2\la_4 + 3\la_5 + 2\la_6 + \la_7 - \la_8$ & 2.8758 \\
$0 \le \D^{30}_8 = -(2\la_1 + \la_2 - 3\la_3 - 2\la_4 - \la_5 - \la_6 + \la_8)$ & 1.5204 \\
$0 \le \D^{31}_8 = -(\la_1 + 2\la_2 - 3\la_3 - \la_4 - 2\la_5 - \la_6 + \la_8)$ & 2.6247 \\
[1ex]   
\hline    
\end{tabular} }
\caption{Klyachko inequalities for a system $\wedge^3\H_8$ and some
numerical values for $\mathrm{Li}$.}
\label{table:M8} 
\end{table}

In the table, we have included the values of the inequalities that result from our calculation,
and in order to analyze those further, we have plotted them 
both in a linear and in a logarithmic scale in Fig.~\ref{graf:Plot}.
The presence of several scales is clearly seen. Moreover, conditions involving
the eighth occupation number are clearly weaker than the previous
ones. The main point, which both confirms and extends the
findings for the toy model in~\cite{pinning}, is the
\textbf{\textit{robustness}} of quasipinning. In particular, the
quantity $\D^{1}_8$, found to be exactly zero in the previous rank,
remains in a strongly pinned regime.

\begin{figure}[ht] 
\centering
\begin{subfigure}[b]{.47\textwidth}
\centering
\includegraphics[width=\textwidth]{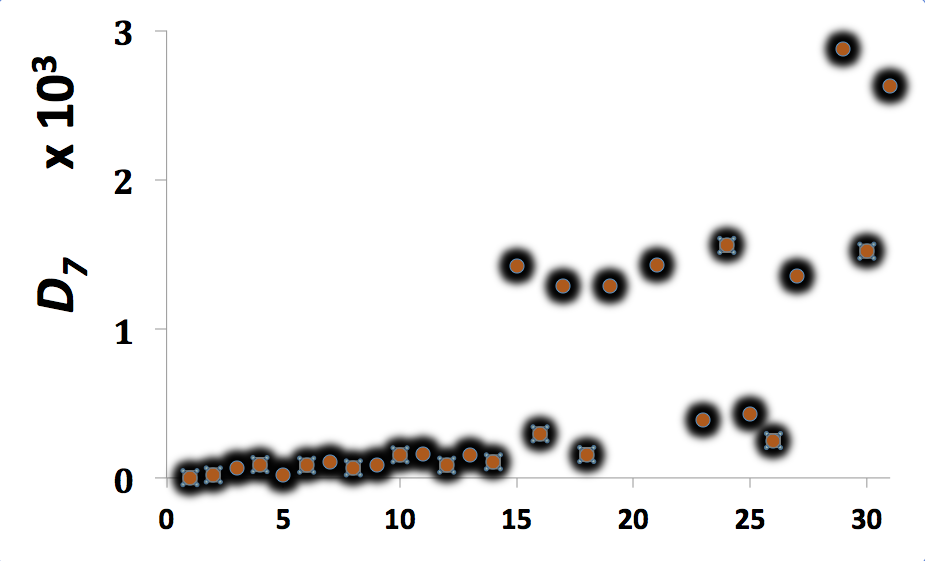}
\caption{} 
\end{subfigure}
\quad
%\hfill
\begin{subfigure}[b]{.47\textwidth}
\centering
\includegraphics[width=\textwidth]{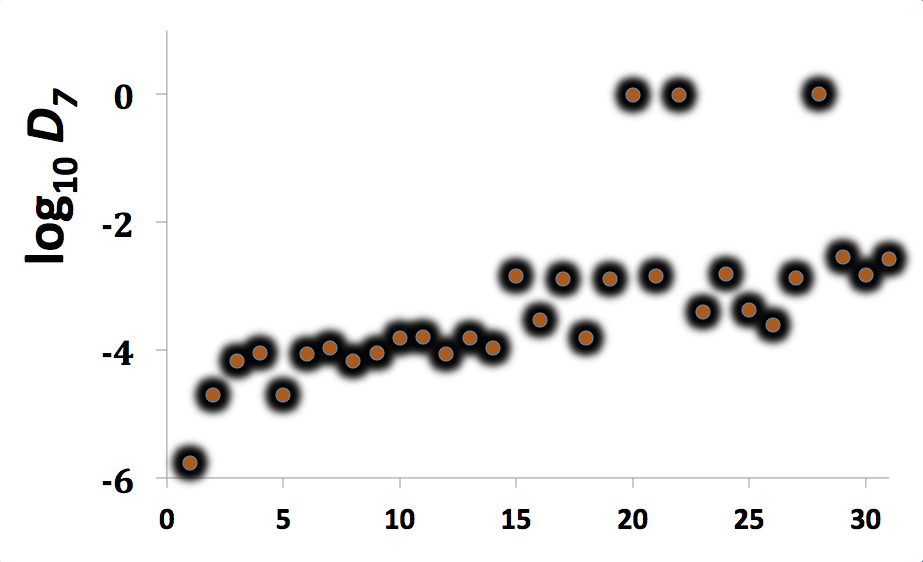}  
\caption{}
\end{subfigure}
\hfill \caption{Plot of the inequalities of Table \ref{table:M8} in
(a) values of $10^{-3}$ and (b) a logarithmic scale.}
\label{graf:Plot}
\end{figure}

Finally, one can examine the effect of the saturation conditions and
the resulting dramatic reduction of the number of Slater determinants,
as well as the simplification of the corresponding two-body matrix.
This remarkable evolution is visualized in Fig.~\ref{graf:satur}.

\begin{figure}
\centering
\includegraphics[width=8cm]{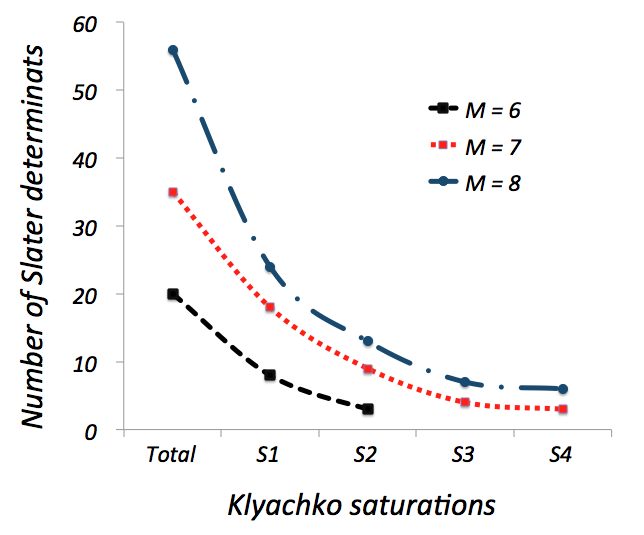}
\caption{(Color online) Evolution of the dimension of the space of Slater
determinants as a function of Klyachko's saturations for rank six,
seven and eight.}
\label{graf:satur}
\end{figure}

% \S 6
\section{Conclusion}
\label{sec:conclusion}

By means of numerical calculations, we have explored the nature of the
quasi-pinning in real three-electron atoms. In the space
$\wedge^3\H_6$, for \textit{restricted} spin-orbitals we find that the
Klyachko constraint is saturated. For \textit{unrestricted}
configurations, quasi-pinning is bounded by the lowest occupation
number~$\la_6$. In approximations of larger rank, the Klyachko
constraints split into well differentiated groups of different levels
of saturation. In other words, for a real system we find results
compatible with those found previously for the model system
\cite{pinning}. A simple geometric probability argument also suggests
stability of quasi-pinning. Moreover, whenever $3-\la_1-\la_2-\la_3$
is not far from zero, projecting the complete wave function of the CI
picture into the set of pinned states appears to result in negligible
loss of information. Thus, the Klyachko-guided addition of a few
Slater determinants to Hartree--Fock type states becomes a low-cost
approach to accurate atomic wave functions.
 
In addition, through our work we have verified some recent results of
quantum information theory for three-electron systems. In particular,
we find non-trivial, but fairly low, quantum entanglement in ground
states.

Even if the present study is a step towards a general description of
real systems, the system of our study is special in a couple of
aspects. The Hartree--Fock approximation is a good starting point for
the $\mathrm{Li}$-like atoms, and the system has a high spherical
symmetry. In future work we hope to address cases where the occupation
numbers will not lie so close to 0 or 1, and/or the symmetry is
reduced. We will thereby be able to explore whether the conclusions we
have drawn in the present study hold more generally in chemistry.

\appendix

\section{On the nature of the Klyachko restrictions}
\label{sec:Klyachko}

Here, we shall not give proofs of the Klyachko
constraints but just discuss some few aspects of relevance to the present
work. It is useful to consider the skew Cauchy
formula,
\begin{equation}
\bigwedge\nolimits^{\!N} \bigl(\H_\mathrm{s} \ox \H_\mathrm{orb}\bigr)
= \sum_{|\kappa|=N}
\H_\mathrm{s}^\kappa \ox \H_\mathrm{orb}^{\bar\kappa}.
\label{eq:cirque-du-soleil}
\end{equation}
Here, $\kappa$ denotes the representation corresponding to the
partition or Young tableau~$\kappa$, and $\bar\kappa$ is the dual
partition.

In the present work we have exclusively the case $\dim\H_\mathrm{s}=2$ and $N=3$, 
which makes everything relatively simple. The three-electron
state space splits into spin-orbital sectors, which one needs to
specify in order to check quasi-pinning, as well as to gauge
entanglement. $\H_\mathrm{s}$ corresponds to a spin-$\half$
particle. Therefore, on the left hand side we may have only
representations of $SU(2)$, i.e., either $j = \half$ or
$j = \tfrac{3}{2}$ for three particles. Since there are no skewsymmetric
combinations of three spins-$\half$, the partition $(1,1,1)$ on the
right hand side plays no role; consequently, only tableaux with up to
two columns may appear on the left hand side.

Consider for instance the first non-trivial case $\wedge^3(\H_2\ox
\H_3)$ in the configuration $6^a$ of Section~\ref{sec:seriousranks1}.
There are 20 configurations in all. Clearly there is one with three
spin down and one with three spin up, belonging to the representation
with $j = \tfrac{3}{2}$. Of the eighteen remaining states, nine have
one spin down in total, and nine have spin up. But only eight of
each belong to the $j = \half$ representation; the other two belong to
$j = \tfrac{3}{2}$; whereby the spatial orbitals enter in the unique
completely skewsymmetric combination. This takes care of ``spin
contamination''. Accordingly,
$$
\wedge^3\bigl(\H_{2\,\mathrm s} \ox \H_{3\,\mathrm{orb}}\bigr) =
\dn\H_2^{\ox3} \oplus \up \H_2^{\ox3} + \H_\mathrm{s}^{3/2}
\ox \wedge^3\H_{3\,\mathrm{orb}}.
$$
From these simple observations to the generalized Pauli constraints
there is still a long haul, demanding generous dollops of Kirillov's
theory of orbits of the coadjoint action for compact
groups~\cite{Alturulato}; the surprising outcome is that only linear
inequalities are found.

Of course, not all of our basis sets conform to the left hand side
of~\eqref{eq:cirque-du-soleil}. This causes no problem, however, since
any basis set can be considered a special case of a larger one
with the ``right'' structure, with some holes. Important is it that the
Klyachko restrictions are \textit{consistent}, so lower-rank ones can
be derived from higher-rank ones. Recall for instance our example
$\wedge^3\H_7$, and the four corresponding relations:
\begin{align*}
\la_1 + \la_2 + \la_4 + \la_7 &\leq 2; \qquad \la_1 + \la_2 + \la_5 + 
\la_6 \leq 2;
\\
\la_2 + \la_3 + \la_4 + \la_5 &\leq 2; \qquad \la_1 + \la_3 + \la_4 +
\la_6 \leq 2.
\end{align*}
At first, the original Pauli principle $\la_1\le1$ is perhaps not
entirely obvious here; it follows from summing the second and the
fourth. Also, let us consider the case $\la_7=0$. Then summing the
second and the third we obtain $\la_2+\la_5\le1$; the second and
fourth yield $\la_3+\la_4\le1$, and so on: we plainly recover the
Borland--Dennis relations for~$\wedge^3\H_6$. The reader will have no
difficulty in retrieving all the lower-rank relations from the ones
on~$\wedge^3\H_8$.

\section{Bounds for the rank-seven approximation }
\label{sec:bounds}

\begin{thm}
Let $|\Phi\rangle \in \wedge^3 \H_7$ be a wave function with natural
orbitals $\{|\al_i\rangle\}^7_{i=1}$ and occupation numbers
$\{\la_i\}^7_{i=1}$ arranged in decreasing order. Let $P_7$ be the
projection operator onto the subspace spanned by the Slater
determinants
$$
[\al_1\al_2\al_3], \ [\al_1\al_4\al_5], \ [\al_1\al_4\al_6], \
[\al_1\al_5\al_7], \ [\al_1\al_6\al_7], \ [\al_2\al_4\al_5], \
[\al_2\al_4\al_6], \ [\al_2\al_5\al_7], \ [\al_2\al_6\al_7].
$$
Upper and lower bounds of this projection are given by
\begin{align}
1 - \frac{1+9\xi}{1-11\xi}\, \D^2_7 \leq \|P_7\Phi\|^2_2
\leq 1 - \half \D^2_7 ,
\word{where} \xi := 3 - \la_1 - \la_2 - \la_3 < \frac{1}{11} \,.
\label{eq:ulbounds}
\end{align}
\end{thm}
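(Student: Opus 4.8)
My plan is to work entirely in the natural-orbital Slater basis, writing $|\Phi\rangle = \sum_{1\le i<j<k\le 7} c_{ijk}\,[ijk]$ with $\sum_{i<j<k}|c_{ijk}|^2 = 1$. In this basis $\rho_1$ is diagonal, so $\la_p = \sum_{\{j,k\}\not\ni p}|c_{\{p,j,k\}}|^2$ for every $p$ (whence $|c_D|^2\le\la_p$ whenever $p\in D$), while the vanishing of each off-diagonal entry, $\rho_{pq}=\sum_{\{j,k\}}\pm\,c_{\{p,j,k\}}c^*_{\{q,j,k\}}=0$ for $p\ne q$, provides the exact linear constraints I shall lean on. Putting $S:=\{1,2,5,6\}$ and $m(D):=|D\cap S|$, summation of the diagonal formula gives $\la_1+\la_2+\la_5+\la_6=\sum_D|c_D|^2 m(D)$, hence
$$
\D^2_7 = 2 - (\la_1+\la_2+\la_5+\la_6) = \sum_D |c_D|^2\,\bigl(2 - m(D)\bigr).
$$
Enumerating the $35$ determinants by $m$ shows that the nine spanning the range of $P_7$ are exactly nine of the eighteen with $m=2$; the single determinant $[347]$ has $m=0$, the twelve with $m=1$ are all forbidden, and the four with $m=3$ together with the remaining nine $m=2$ determinants complete the forbidden set. (Here $\{1,2,5,6\}$ and $\{3,4,7\}$ are precisely the down- and up-spin natural orbitals of the factorised picture, which is why $S$ is the natural grouping.)

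For the upper bound I would note that in the identity above the only positive contributions to $\D^2_7$ come from the $m=0$ determinant (weight $2$) and the twelve $m=1$ determinants (weight $1$), all of which are forbidden, while the $m=2$ terms contribute zero and the $m=3$ terms are subtracted. Hence $\D^2_7 \le 2\sum_{m(D)\le 1}|c_D|^2 \le 2\,\|(1-P_7)\Phi\|^2$, which is exactly $\|P_7\Phi\|^2 \le 1 - \half\D^2_7$. This half is routine.

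The lower bound is the substance. Writing $w:=\|(1-P_7)\Phi\|^2$ and splitting the forbidden weight as $W_1$ (the $m=1$ part), $W_3$ (the $m=3$ part), $|c_{347}|^2$, and $W_2^{\mathrm f}$ (the nine forbidden $m=2$ amplitudes), the identity for $\D^2_7$ rearranges to $W_1+2|c_{347}|^2=\D^2_7+W_3$, so that $w = \D^2_7 + W_2^{\mathrm f} + 2W_3 - |c_{347}|^2 \le \D^2_7 + \bigl(W_2^{\mathrm f} + 2W_3\bigr)$. Everything therefore reduces to controlling the ``anomalous'' weight $W_2^{\mathrm f}+2W_3$, which does not feed into $\D^2_7$ and so must be supplied by the off-diagonal constraints. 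The crude estimate from $|c_D|^2\le\la_p$ only yields $W_2^{\mathrm f}+2W_3\lesssim\xi$, which is too weak; the key is to upgrade it. For each anomalous amplitude I would select an identity $\rho_{pq}=0$ in which it multiplies a dominant allowed amplitude (bounded below since $\la_1,\la_2,\la_3$ are near $1$ when $\xi$ is small), writing it as a combination of terms each carrying a factor bounded through $|c_D|^2\le\la_p$ by a low occupancy. Cauchy--Schwarz together with $\la_4+\la_5+\la_6+\la_7=3-\la_1-\la_2-\la_3=\xi$ then bounds $W_2^{\mathrm f}+2W_3$ by $\xi\bigl(a\,w+b\,\D^2_7\bigr)$, giving the self-consistent inequality $w \le \D^2_7 + \xi\bigl(a\,w + b\,\D^2_7\bigr)$, whose solution $w \le \frac{1+b\xi}{1-a\xi}\,\D^2_7$ is the claim once the constants are pinned down to $b=9$, $a=11$.

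The main obstacle is precisely this last bookkeeping: the off-diagonal relations couple the thirteen anomalous amplitudes to one another as well as to the $m\le1$ amplitudes, so one is really solving a coupled linear system, and the challenge is to route each amplitude through the right identity and track the Cauchy--Schwarz constants tightly enough to land on the exact pair $(9,11)$ rather than a coarser one. A secondary but essential point is that the dominant amplitudes must be bounded away from zero to license the divisions; this, together with requiring the denominator $1-11\xi$ to stay positive, is exactly what forces the hypothesis $\xi<\tfrac{1}{11}$.
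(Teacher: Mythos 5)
Your setup and your upper bound are sound: the identity $\D^2_7=\sum_D|c_D|^2\bigl(2-m(D)\bigr)$ with $m(D)=|D\cap\{1,2,5,6\}|$ is correct, it reproduces the paper's expression $\D^2_7=L+S-|c_{125}|^2-|c_{126}|^2$ when specialized, and the observation that every determinant with $m(D)\le 1$ lies outside the range of $P_7$ yields $\|P_7\Phi\|^2_2\le 1-\half\D^2_7$ cleanly --- arguably more transparently than the paper's term-by-term comparison. The gap is the lower bound, which is the substance of the theorem and which you have only planned, not proved. You reduce it to showing $W_2^{\mathrm f}+2W_3\le\xi\,(a\,w+b\,\D^2_7)$ with $(a,b)=(11,9)$ and then explicitly concede that extracting these constants from the coupled system of off-diagonal constraints is the remaining difficulty. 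That is exactly the step that cannot be deferred: the claimed inequality carries those specific constants, and no concrete chain of estimates producing them is exhibited.

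Moreover, you have overlooked the structural hypothesis that makes the paper's bookkeeping tractable. The paper's proof invokes (as stated in Section 5.1, though omitted from the appendix statement) the saturation of the first constraint $\D^1_7=0$, i.e.\ the factorization $\wedge^3\H_7\to\H_3\ox\wedge^2\H_4$ with $J_1=\{3,5,6\}$ and $J_2=\{1,2,4,7\}$, so that only the eighteen determinants with one index in $J_1$ and two in $J_2$ occur. In that subspace your $W_2^{\mathrm f}$ vanishes identically and $W_3=|c_{125}|^2+|c_{126}|^2$; each of these two amplitudes couples directly to the Hartree--Fock amplitude $\epsilon=c_{123}$ through the single relations $\langle\al_5|\rho_1|\al_3\rangle=0$ and $\langle\al_6|\rho_1|\al_3\rangle=0$, a five-term Cauchy inequality bounds each by $\tfrac{5(1-|\epsilon|^2)}{|\epsilon|^2}\bigl[L+\half(\cdots)\bigr]$, and a one-parameter optimization (the choice $r=2|\epsilon|^2/(11|\epsilon|^2-10)$) together with $1-|\epsilon|^2\le\xi$ delivers exactly $(1+9\xi)/(1-11\xi)$. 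By working in the full $35$-dimensional space you face thirteen anomalous amplitudes instead of two, and your device of pairing each one with a dominant allowed amplitude breaks down for some of them (e.g.\ $c_{456}$ never meets $c_{123}$ in any relation $\rho_{pq}=0$, since the two index sets share no pair); it is then far from clear that the constants $9$ and $11$ survive. You need either to import the saturation hypothesis, or to prove a genuinely harder statement --- and in either case to actually carry out the Cauchy--Schwarz bookkeeping you have postponed.
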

\begin{proof}
Let $J_1 = \{3,5,6\}$ and $J_2 = \{1,2,4,7\}$. A general wave function
in $ \wedge^3\H_7$ is given by
$$
|\Phi\rangle = \sum_{\substack{k \in J_1\\ i,j \in J_2}} c_{ijk}
[\al_i\al_j\al_k] \word{and consequently} \la_m = \sum_{\substack{k
\in J_1\\ i,j \in J_2 \\ m \in\{i,j,k\} }} |c_{ijk} |^2.
$$
Therefore,
\begin{align*}
\la_1 &= |c_{123} |^2 + |c_{125} |^2 + |c_{126} |^2 + |c_{134} |^2 +
|c_{145} |^2 + |c_{146} |^2 + |c_{137} |^2 + |c_{157} |^2 + |c_{167}|^2,
\\
\la_2 &= |c_{123} |^2 + |c_{125} |^2 + |c_{126} |^2 + |c_{234} |^2 +
|c_{245} |^2 + |c_{246} |^2 + |c_{237} |^2 + |c_{257} |^2 + |c_{267}
|^2,
\\
\la_5 &= |c_{125} |^2 + |c_{145} |^2 + |c_{157} |^2 + |c_{245} |^2 +
|c_{257} |^2 + |c_{457} |^2,
\\
\la_6 &= |c_{126} |^2 + |c_{146} |^2 + |c_{167} |^2 + |c_{246} |^2 +
|c_{267} |^2 + |c_{467} |^2 .
\end{align*}
A simple computation gives
\begin{align*}
&\la_1 +\la_2 +\la_5 +\la_6 = 2 |c_{123} |^2 + 3 |c_{125} |^2 + 3
|c_{126} |^2 + |c_{134} |^2 + 2 |c_{145} |^2 + 2 |c_{146} |^2 +
|c_{137} |^2
\\
&\; + 2 |c_{157} |^2 + 2 |c_{167} |^2 + |c_{234} |^2 + 2 |c_{245} |^2
+ 2|c_{246} |^2 + |c_{237} |^2 + 2 |c_{257} |^2 + 2 |c_{267} |^2 +
|c_{457} |^2
\\
&\; + |c_{467} |^2. \word{And trivially} \D^2_7 = 2 - (\la_1 +\la_2
+\la_5 +\la_6) = L + S - |c_{125} |^2 - |c_{126} |^2,
\end{align*}
where $S := 2 |c_{347} |^2 + |c_{457} |^2 + |c_{467} |^2$ and $L :=
|c_{134} |^2 + |c_{137} |^2 + |c_{234} |^2 + |c_{237} |^2$. Thus, for
the projection onto the aforementioned subspace we have
\begin{align*}
\| P_7 \Phi\|^2_2 &= |c_{123}|^2 + |c_{145}|^2 + |c_{146}|^2 +
|c_{157}|^2 + |c_{167}|^2 + |c_{245}|^2 + |c_{246}|^2 + |c_{257}|^2 +
|c_{267}|^2
\\
&= 1 - (L + |c_{347} |^2 + |c_{125} |^2 + |c_{457} |^2 + |c_{126} |^2
+ |c_{467} |^2 )
\\
&\leq 1 - \half (L + 2 |c_{347} |^2 - |c_{125} |^2 + |c_{457} |^2 -
|c_{126} |^2 + |c_{467} |^2 ) = 1 - \half \D^2_7,
\end{align*}
which is the upper bound of \eqref{eq:ulbounds}.

To establish the lower bound, note that in the basis of natural
orbitals we know that
\begin{align*}
\langle \al_6 | \rho_1 | \al_3 \rangle &= c_{123}^* c_{126} -
c_{134}^* c_{146} + c_{137}^* c_{167} -c_{234}^* c_{246} + c_{237}^*
c_{267} - c_{347}^* c_{467} = 0
\\
\langle \al_5 | \rho_1 | \al_3 \rangle &= c_{123}^* c_{125} -
c_{134}^* c_{145} + c_{137}^* c_{157} +c_{234}^* c_{245} + c_{237}^*
c_{257} - c_{347}^* c_{457} = 0.
\end{align*}
Let $\epsilon := c_{123}$, the amplitude of the Hartree--Fock
determinant. Using the Cauchy inequality $(A + B+ C+ D+ E)^2 \leq
5(A^2 + B^2 + C^2 + D^2 + E^2)$ as well as $|c_{abc} |^2 \leq 1 -
|\epsilon |^2$ whenever $abc \neq 123$, we obtain:
\begin{align*}
|c_{126} |^2 &\leq \frac{5}{|\epsilon|^2} \big[|c_{134} |^2 |c_{146}
|^2 + |c_{137} |^2 |c_{167} |^2 + |c_{234} |^2 |c_{246} |^2 + |c_{237}
|^2 |c_{267} |^2 + |c_{347} |^2 |c_{467} |^2\big]
\\
&\leq \frac{5(1- |\epsilon |^2)}{|\epsilon |^2} \Big[L + \half (
|c_{347} |^2 + |c_{467} |^2) \Big] \word{and}
\\
|c_{125} |^2 &\leq \frac{5}{|\epsilon |^2} \big[|c_{134} |^2 |c_{145}
|^2 + |c_{137} |^2 |c_{157} |^2 + |c_{234} |^2 |c_{245} |^2 + |c_{237}
|^2 |c_{257} |^2 + |c_{347} |^2 |c_{457} |^2\big]
\\
&\leq \frac{5(1- |\epsilon |^2)}{|\epsilon |^2} \Big[L + \half (
|c_{347} |^2 + |c_{457} |^2) \Big].
\end{align*}
Let us set, for some $r,u\geq0$,
\begin{align*}
&L + |c_{347} |^2 + |c_{125} |^2 + |c_{457} |^2 + |c_{126} |^2 +
|c_{467} |^2
\\
&\leq L + (1 + u) S + (1 - r) (|c_{125} |^2 + |c_{126} |^2) + r
(|c_{125} |^2 + |c_{126} |^2)
\\
& \leq L + (1 + u) S + (1 - r) (|c_{125} |^2 + |c_{126} |^2) +
\frac{5r(1- |\epsilon |^2)}{|\epsilon |^2} \Big[2L + \half S \Big] \\
&= \Bigg[1+\frac{10r(1- |\epsilon |^2)}{|\epsilon |^2} \Bigg] L + (1 -
r) (|c_{125} |^2 + |c_{126} |^2) + \Bigg[ (1 + u) + \frac{5r(1-
|\epsilon |^2)}{2|\epsilon |^2} \Bigg] S .
\end{align*}
By choosing
\begin{align*}
r = \frac{2|\epsilon |^2}{11|\epsilon |^2-10} \word{and} u =
\frac{15(1- |\epsilon |^2)}{11|\epsilon |^2-10} \word{with} |\epsilon
|^2 > 10/11,
\end{align*} 
we obtain that $L + |c_{347} |^2 + |c_{125} |^2 + |c_{457} |^2 +
|c_{126} |^2 + |c_{467} |^2 \leq (r-1) \D_7^2 $. It is now clear that
\begin{align*}
\|P_7 \Phi\|^2_2 &= 1 - (L + |c_{347} |^2 + |c_{125} |^2 + |c_{457}
|^2 + |c_{126} |^2 + |c_{467} |^2 )
\\
& \geq 1 - (r - 1) \D_7^2 = 1 - \frac{1 + 9(1 - |\epsilon |^2 )}{1 -
11 (1 - |\epsilon |^2 )} \D_7^2 \geq 1 - \frac{1 + 9\xi}{1 - 11 \xi}
\D_7^2,
\end{align*}
where in the last inequality we have used $1 - |\epsilon |^2 \leq \xi
= 3 - \la_1 - \la_2 - \la_3$, which is Lemma~3 in~\cite{supple}.
\end{proof}

\section*{Acknowledgments}

CLBR was supported by a Francisco Jos\' e de Caldas scholarship,
funded by Colciencias. He very much appreciates the warm atmosphere of
the Physikalische und Theoretische Chemie group at Saarlandes
Universit\"at. JMGB was, hopefully, supported by the Diputaci\'on
General de Arag\'on (grant DGIID-DGA-E24/2) and the defunct Ministerio
de Educaci\'on y Ciencia (grant FPA2009-09638). We are grateful to
J.~C.~V\'arilly for helpful comments.


\begin{thebibliography}{15}

\footnotesize

\bibitem{Alturulato}
M. Altunbulak and A. Klyachko,
Commun. Math. Phys. \textbf{282}, 287 (2008).

\bibitem{losprecursores}
R. E. Borland and K. Dennis,
J. Phys. B \textbf{5}, 7 (1972).

\bibitem{seapareciomaria}
M. B. Ruskai,
J. Phys. A \textbf{40}, F961 (2007).

\bibitem{Mazziotti}
D. A. Mazziotti, 
Phys. Rev. Lett. \textbf{108}, 263002 (2012). 

\bibitem{pinning}
C. Schilling, D. Gross and M. Christandl,
Phys. Rev. Lett. \textbf{110}, 040404 (2013).

\bibitem{magyares}
P. L\'evay and P. Vrana,
Phys. Rev. A  \textbf{78}, 022329 (2008).

\bibitem{Saarland}
M. Springborg, 
\textit{Methods of Electronic-Structure Calculations},
Wiley, Chichester, 2000.

\bibitem{SL2}
H. Shull and P.-O. L\"owdin,
J. Chem. Phys. \textbf{30}, 617 (1959).

\bibitem{reliablerussian}
N. N. Lebedev,
\textit{Special Functions and their Applications},
Dover, New York, 1972.

\bibitem{themaster}
A. J. Coleman,
Rev. Mod. Phys. \textbf{35}, 668 (1963).

\bibitem{chinochano}
L. Chen, J. Chen, D. \v{Z}. \Djokovic\ and B. Zeng,
``Universal subspaces for local unitary groups of fermionic systems'',
arXiv:1301.3421.

\bibitem{Enredados}
L. Amico, R. Fazio, A. Osterloh and V. Vedral,
Rev. Mod. Phys. \textbf{80}, 517 (2008).

\bibitem{los4mosqueteros}
A. P. Balachandran, T. R. Govindarajan, A. R. de Queiroz and A. F. 
Reyes-Lega,
%``Entanglement and particle identity: a unifying approach'',
Phys. Rev. Lett. \textbf{110}, 080503 (2013).

\bibitem{supple}
Supplemental material in~\cite{pinning}.

\bibitem{Greatsmallbook}
E. R. Davidson,
\textit{Reduced Density Matrices in Quantum Chemistry},
Academic Press, New York, 1976.

\bibitem{Bellona}
C. L. Benavides-Riveros and J. M. Gracia-Bond\'ia,
Phys. Rev. A \textbf{87}, 022118 (2013).

\end{thebibliography}
\end{document}